\title{A Logical Account of Subtyping for Session Types}
\author{Ross Horne
\institute{University of Luxembourg}
\and
Luca Padovani
\institute{University of Camerino}
}
\newif\ifcomments
\newcommand{\cf}{\textit{cf.}\xspace}
\newcommand{\mktag}[1]{\mathsf{\color{teal}#1}}
\newcommand{\mkkeyword}[1]{\mathsf{\color{blue}#1}}
\newcommand{\mkfunction}[1]{\mathsf{#1}}
\newcommand{\parens}[1]{(#1)}
\newcommand{\braces}[1]{\{#1\}}
\newcommand{\bracks}[1]{[#1]}
\newcommand{\angles}[1]{\langle#1\rangle}
\newcommand{\set}[1]{\braces{#1}}
\newcommand{\seqof}[1]{\overline{#1}}
\newcommand{\rulename}[1]{\textnormal{\textsc{\small[#1]}}}
\newcommand{\defrule}[2][]{\hypertarget{rule:\ifblank{#1}{#2}{#1}}{\rulename{#2}}}
\newcommand{\refrule}[2][]{\rulename{#2}\xspace}
\newcommand{\Calculus}{\muCPinf}
\newcommand{\muCPinf}{$\mu\textsf{CP}^\infty$\xspace}
\newcommand{\muMALL}{\texorpdfstring{$\mu\textsf{\upshape MALL}^\infty$}{muMALL}\xspace}
\newcommand{\CP}{$\textsf{CP}$\xspace}
\newcommand{\muCP}{$\mu\textsf{CP}$\xspace}
\newcommand\eoe{\hfill$\lrcorner$}
\newcommand{\existinf}{\exists^\infty}
\newcommand{\NatSet}{\mathbb{N}}
\newcommand\A{\mathsf{A}}
\newcommand\B{\mathsf{B}}
\newcommand{\InTag}{\mktag{in}}
\newcommand{\EndTag}{\mktag{end}}
\newcommand{\AddTag}{\mktag{add}}
\newcommand{\x}{x}
\newcommand{\y}{y}
\newcommand{\z}{z}
\renewcommand{\u}{u}
\renewcommand{\v}{v}
\newcommand{\parop}{\mathbin|}
\newcommand{\Let}[3]{#1\ifblank{#2}{}{\parens{#2}}\triangleq#3}
\newcommand{\Call}[2]{#1\ifblank{#2}{}{\angles{#2}}}
\newcommand{\Close}[1]{#1\bracks{}}
\newcommand{\Wait}[1]{#1\parens{}}
\newcommand{\Fail}[1]{\mkkeyword{fail}~#1}
\newcommand{\Select}[2]{#1\bracks{#2}}
\newcommand{\Left}[1]{\Select{#1}{\InTag_0}}
\newcommand{\Right}[1]{\Select{#1}{\InTag_1}}
\newcommand{\CaseX}[3]{\mkkeyword{case}~{#1}\braces{#2,#3}}
\newcommand{\Case}[3]{\CaseX{#1}{#2}{#3}}
\newcommand{\Send}[4]{#1\bracks{#2}\parens{#3\parop#4}}
\newcommand{\Receive}[2]{#1\parens{#2}}
\newcommand{\Cut}[3]{\parens{#1}\parens{#2\parop#3}}
\newcommand{\Type}{\TypeA}
\newcommand{\TypeA}{A}
\newcommand{\TypeB}{B}
\newcommand{\TypeSet}{\mathcal{T}}
\newcommand{\X}{X}
\newcommand{\Y}{Y}
\newcommand{\mkformula}[1]{#1}
\newcommand{\Bot}{\mkformula\bot}
\newcommand{\Top}{\mkformula\top}
\newcommand{\One}{\mkformula{\mathbf{1}}}
\newcommand{\Zero}{\mkformula{\mathbf{0}}}
\newcommand{\choice}{\mathbin{\mkformula\oplus}}
\newcommand{\branch}{\mathbin{\mkformula\binampersand}}
\newcommand{\xbranch}[1]{{\branch}\set{#1}}
\newcommand{\xchoice}[1]{{\choice}\set{#1}}
\newcommand{\tfork}{\mathbin{\mkformula\otimes}}
\newcommand{\tjoin}{\mathbin{\mkformula\bindnasrepma}}
\newcommand{\tmu}{\mkformula\mu}
\newcommand{\tnu}{\mkformula\nu}
\newcommand{\Num}{\mathsf{Num}}
\newcommand{\Context}{\ContextC}
\newcommand{\ContextC}{\Upgamma}
\newcommand{\ContextD}{\Updelta}
\newcommand{\wtp}[3][]{#2 \vdash\ifblank{#1}{}{_{#1}} #3}
\newcommand{\SubRefl}{refl}
\newcommand{\SubBot}{bot}
\newcommand{\SubTop}{top}
\newcommand{\SubCong}{cong}
\newcommand{\SubLeft}[1]{left-$#1$}
\newcommand{\SubRight}[1]{right-$#1$}
\newcommand{\CallRule}{call}
\newcommand{\FailRule}{$\Top$}
\newcommand{\CloseRule}{$\One$}
\newcommand{\WaitRule}{$\Bot$}
\newcommand{\SendRule}{$\tfork$}
\newcommand{\ReceiveRule}{$\tjoin$}
\newcommand{\SelectRule}{$\choice$}
\newcommand{\CaseRule}{$\branch$}
\newcommand{\FoldRule}[1]{$#1$}
\newcommand{\CoRecRule}{$\tnu$}
\newcommand{\SubRule}{sub}
\newcommand{\pcong}{\preccurlyeq}
\newcommand{\red}{\rightarrow}
\newcommand{\wred}{\Rightarrow}
\newcommand{\nred}{\arrownot\red}
\newcommand{\eqdef}{\stackrel{\smash{\textsf{\upshape\tiny def}}}=}
\newcommand{\subf}{\preceq}
\newcommand{\subt}{\leqslant}
\newcommand{\fn}[1]{\mkfunction{fn}\parens{#1}}
\newcommand{\subst}[2]{\braces{#1/#2}}
\newcommand{\dual}[1]{#1^{\bot}}
\newcommand{\dom}[1]{\mkfunction{dom}\parens{#1}}
\newcommand{\InfOften}[1]{\mkfunction{inf}\parens{#1}}
\newcommand{\sem}[1]{\left\llbracket#1\right\rrbracket}
\theoremstyle{plain}
\newtheorem{theorem}{Theorem}[section]
\newtheorem{corollary}{Corollary}[section]
\theoremstyle{definition}
\newtheorem{definition}{Definition}[section]
\newtheorem{example}{Example}[section]
\theoremstyle{remark}
\theoremstyle{remark}
\begin{document}
\maketitle

\begin{abstract}
    We study the notion of subtyping for session types in a logical setting,
    where session types are propositions of multiplicative/additive linear logic
    extended with least and greatest fixed points. The resulting subtyping
    relation admits a simple characterization that can be roughly spelled out as
    the following lapalissade: every session type is larger than the smallest
    session type and smaller than the largest session type. At the same time, we
    observe that this subtyping, unlike traditional ones, preserves termination
    in addition to the usual safety properties of sessions. We present a
    calculus of sessions that adopts this subtyping relation and we show that
    subtyping, while useful in practice, is superfluous in the theory: every use
    of subtyping can be ``compiled away'' via a coercion semantics.
\end{abstract}
    
\section{Introduction}
\label{sec:introduction}

Session types~\cite{Honda93,HondaVasconcelosKubo98,HuttelEtAl16} are
descriptions of communication protocols supported by an elegant correspondence
with linear logic~\cite{Wadler14,CairesPfenningToninho16,LindleyMorris16} that
provides session type systems with solid logical foundations.
As an example, below is the definition of a session type describing the protocol
implemented by a mathematical server (in the examples of this section, $\branch$
and $\choice$ are $n$-ary operators denoting external and internal labeled
choices, respectively):
\[
    B =
    \xbranch{
        \EndTag : \Bot,
        \AddTag : \dual\Num \tjoin \dual\Num \tjoin \Num \tfork B
    }
\]

According to the session type $B$, the server first waits for a label -- either
$\EndTag$ or $\AddTag$ -- that identifies the operation requested by the client.
If the label is $\EndTag$, the client has no more requests and the server
terminates. If the label is $\AddTag$, the server waits for two numbers, sends
their sum back to the client and then makes itself available again offering the
same protocol $B$. In this example, we write $\dual\Num$ for the type of numbers
being consumed and $\Num$ for the type of numbers being produced.
A client of this server could implement a communication protocol described by
the following session type:
\[
A =
\xchoice{
    \AddTag : \Num \tfork \Num \tfork \dual\Num \tjoin
              \xchoice{ \EndTag : \One }
}
\]

This client sends the label $\AddTag$ followed by two numbers, it
receives the result and then terminates the interaction with the server by
sending the label $\EndTag$.
When we connect two processes through a session, we expect their interaction to
be flawless. In many session type systems, this is guaranteed by making sure
that the session type describing the behavior of one process is the \emph{dual}
of the session type describing the behavior of its peer. \emph{Duality}, often
denoted by $\dual{\,\cdot\,}$, is the operator on session types that inverts the
\emph{direction} of messages without otherwise altering the structure of
protocol.
In the above example it is clear that $A$ is \emph{not} the dual of $B$ nor is
$B$ the dual of $A$. Nonetheless, we would like such client and such server to
be declared compatible, since the client is exercising only a subset of the
capabilities of the server.
To express this compatibility we have to resort to a more complex relation
between $A$ and $B$, either by observing that $B$ (the behavior of the server)
is a \emph{more accommodating} version of $\dual{A}$ or by observing that $A$
(the behavior of the client) is a \emph{less demanding} version of $\dual{B}$.
We make these relations precise by means of a \emph{subtyping relation} $\subt$
for session types. Subtyping enhances the applicability of type systems by means
of the well-known substitution principle: an entity of type $C$ can be used
where an entity of type $D$ is expected if $C$ is a subtype of $D$. After the
initial work of Gay and Hole~\cite{GayHole05} many subtyping relations for
session types have been
studied~\cite{CastagnaDezaniGiachinoPadovani09,Padovani13,MostrousYoshida15,Padovani16,GhilezanEtAl21}.
Such subtyping relations differ widely in the way they are defined and/or in the
properties they preserve, but they all share the fact that subtyping is
essentially defined by the branching structure of session types given by labels.
To illustrate this aspect, let us consider again the session types $A$ and $B$
defined above. We have
\begin{equation}
    \label{eq:ab}
    B \subt
    \xbranch{
        \AddTag : \dual\Num \tjoin \dual\Num \tjoin \Num \tfork
        \xbranch{
            \EndTag : \Bot
        }
    }
    = \dual{A}
\end{equation}
meaning that a server behaving as $B$ can be safely used where a server behaving
as $\dual{A}$ is expected. Dually, we also have
\begin{equation}
    \label{eq:ba}
    {A} \subt
    \xchoice{
        \EndTag : \One,
        \AddTag : \Num \tfork \Num \tfork \dual\Num \tjoin \dual{B}
    }
    = \dual{B}
\end{equation}
meaning that a client behaving as $A$ can be safely used where a client behaving
as $\dual{B}$ is expected.
Note how subtyping is crucially determined by the sets of labels that can be
received/sent when comparing two related types.
In \eqref{eq:ab}, the server of type $B$ is willing to accept any label from the
set $\set{\EndTag,\AddTag}$, which is a \emph{superset} of $\set\AddTag$ that we
have in $\dual{A}$.
In \eqref{eq:ba}, the client is (initially) sending a label from the set
$\set\AddTag$, which is a subset of $\set{\EndTag,\AddTag}$ that we have in
$\dual{B}$.
This co/contra variance of labels in session types is a key distinguishing
feature of all known notions of subtyping for session types.\footnote{Gay and
Hole~\cite{GayHole05} and other
authors~\cite{CastagnaDezaniGiachinoPadovani09,Padovani13,Padovani16} define
subtyping for session types in such a way that the \emph{opposite} relations of
\cref{eq:ab,eq:ba} hold. Both viewpoints are viable depending on whether session
types are considered to be types of \emph{channels} or types of
\emph{processes}. Here we take the latter stance, referring to Gay~\cite{Gay16}
for a comparison of the two approaches.}

In this work we study the notion of subtyping for session types in a setting
where session types are propositions of
\muMALL~\cite{BaeldeDoumaneSaurin16,Doumane17}, the infinitary proof theory of
multiplicative additive linear logic extended with least and greatest fixed
points.
Our investigation has two objectives.
First, to understand whether and how it is possible to capture the well-known
co/contra variance of behaviors when the connectives used to describe branching
session types ($\branch$ and $\choice$ of linear logic) have fixed arity.
Second, to understand whether there are criticial aspects of subtyping that
become relevant when typing derivations are meant to be logically sound.


At the core of our proposal is the observation that, when session types (hence
process behaviors) are represented by linear logic
propositions~\cite{Wadler14,CairesPfenningToninho16,LindleyMorris16}, it is
impossible to write a process that behaves as $\Zero$ and it is very easy to
write a process that behaves as $\Top$. If we think of a session type as the set
of processes that behave according to that type, this means that the additive
constants $\Zero$ and $\Top$ may serve well as the least and greatest elements
of a session subtyping relation. Somewhat surprisingly, the subtyping relation
defined by these properties of $\Zero$ and $\Top$ allows us to express
essentially the same subtyping relations that arise from the usual co/contra
variance of labels.
For example, following our proposal the session type of the client, previously
denoted $A$, would instead be written as
\[
    C = \xchoice{
        \EndTag : \Zero,
        \AddTag : \Num \tfork \Num \tfork \dual\Num \tjoin
        \xchoice{
            \EndTag : \One,
            \AddTag : \Zero
        }
    }
\]
using which we can derive both
\[
    B \subt
    \xbranch{
        \EndTag : \Top,
        \AddTag : \dual\Num \tjoin \dual\Num \tjoin \Num \tfork
        \xbranch{
            \EndTag : \Bot,
            \AddTag : \Top
        }
    }
    = \dual{C}
    \text{\qquad as well as\qquad}
    C \subt \dual{B}
\]
without comparing labels and just using the fact that $\Zero$ is the least
session type and $\Top$ the greatest one.
Basically, instead of \emph{omitting those labels} that correspond to impossible
continuations (\cf the missing $\EndTag$ and $\AddTag$ in $A$), we use the
uninhabited session type $\Zero$ or its dual $\Top$ \emph{as impossible
continuations} (\cf~$C$). It could be argued that the difference between the two
approaches is mostly cosmetic. Indeed, it is easy to devise (de)sugaring
functions to rewrite session types from one syntax to the other.
However, the novel approach we propose allows us to recast the well-known
subtyping relation for session types in a logical setting. A first consequence
of this achievement is that the soundness of the type system \emph{with
subtyping} does not require an \emph{ad hoc} proof, but follows from the
soundness of the type system \emph{without subtyping} through a suitable
coercion semantics. In addition, we find out that the subtyping relation we
propose preserves not only the usual \emph{safety properties} -- communication
safety, protocol fidelity and deadlock freedom -- but also \emph{termination},
which is a \emph{liveness property}.


\paragraph{Structure of the paper.}
In \Cref{sec:language} we define \Calculus, a session calculus of processes
closely related to \muCP~\cite{LindleyMorris16} and \CP~\cite{Wadler14}.
In \Cref{sec:types} we define the type language for \Calculus and the subtyping
relation.
In \Cref{sec:typing-rules} we define the typing rules for \Calculus and give a
coercion semantics to subtyping, thus showing that the type system of \Calculus
is a conservative extension of \muMALL~\cite{BaeldeDoumaneSaurin16,Doumane17}.
We wrap up in \Cref{sec:conclusion}.
%

\newcommand\Buyer{\textit{Buyer}}
\newcommand\Seller{\textit{Seller}}

\section{Syntax and semantics of \texorpdfstring\Calculus{calculus}}
\label{sec:language}

\begin{table}
    \centering
    \begin{math}
        \displaystyle
        \begin{array}[t]{@{}r@{~}c@{~}ll@{}}
            \textbf{Process} \quad
            P, Q & ::= & &
            \\
            &   & \Call\A{\seqof\x} & \text{invocation}
            \\
            & | & \Wait\x.P & \text{signal input}
            \\
            & | & \Receive\x\y.P & \text{channel input}
            \\
            & | & \Case\x{P}{Q} & \text{choice input}
        \end{array}
        ~
        \begin{array}[t]{@{}r@{~}c@{~}lll@{}}
            & | & \Cut\x{P}{Q} & \text{composition}
            \\
            & | & \Fail\x & \text{failure}
            \\
            & | & \Close\x & \text{signal output}
            \\
            & | & \Send\x\y{P}{Q} & \text{channel output}
            \\
            & | & \Select\x{\InTag_i}.P & \text{choice output} & i\in\set{0,1}
        \end{array}
    \end{math}
    \caption{\label{tab:syntax} Syntax of \Calculus.}
\end{table}

The syntax of \Calculus is shown in \Cref{tab:syntax} and makes use of a set of
\emph{process names} $\A$, $\B$, \ldots and of an infinite set of \emph{channels}
$x$, $y$, $z$ and so on.
The calculus includes standard forms representing communication actions:
$\Fail\x$ models a process failing on $x$; $\Wait\x.P$ and $\Close\x$ model the
input/output of a termination signal on $x$; $\Case\x{P}{Q}$ and
$\Select\x{\InTag_i}.P$ model the input/output of a label $\InTag_i$ on $x$;
$\Receive\x\y.P$ and $\Send\x\y{P}{Q}$ model the input/output of a channel $y$
on $x$. Note that $\Send\x\y{P}{Q}$ outputs a \emph{new} channel $y$ which is
bound in $P$ but not in $Q$. Free channel output can be encoded as shown in
previous works~\cite{LindleyMorris16}.
%
%
The form $\Cut\x{P}{Q}$ models a session $x$ connecting two parallel processes
$P$ and $Q$ and the form $\Call\A{\seqof\x}$ models the invocation of the
process named $\A$ with parameters $\seqof\x$. For each process name $\A$ we
assume that there is a unique global definition of the form
$\Let\A{\seqof\x}{P}$ that gives its meaning. Hereafter $\seqof\x$ denotes a
possibly empty sequence of channels.
%
%
The notions of free and bound channels are defined in the expected way. We
identify processes up to renaming of bound channels and we write $\fn{P}$ for
the set of free channels of $P$.

\begin{table}
  \[
    \begin{array}{@{}rr@{~}c@{~}ll@{}}
      \defrule{s-par-comm} &
      \Cut\x{P}{Q} & \pcong & \Cut\x{Q}{P}
      \\
      \defrule{s-par-assoc} &
      \Cut\x{P}{\Cut\y{Q}{R}} & \pcong & \Cut\y{\Cut\x{P}{Q}}{R} &
      x\in\fn{Q}\setminus\fn{R}, y\not\in\fn{P}
      \\
      \defrule{s-call} &
      \Call\A{\seqof\x} & \pcong & P & \Let\A{\seqof\x}P
      \\
      \\
      \defrule{r-close} &
      \Cut\x{\Close\x}{\Wait\x.P} & \red & P
      \\
      \defrule{r-comm} &
      \Cut\x{\Send\x\y{P}{Q}}{\Receive\x\y.R} & \red & \Cut\y{P}{\Cut\x{Q}{R}}
      \\
      \defrule{r-case} &
      \Cut\x{\Select\x{\InTag_i}.P}{\Case\x{Q_0}{Q_1}} & \red & \Cut\x{P}{Q_i}
      & i \in \set{0,1}
      \\
      \defrule{r-par} &
      \Cut\x{P}{R} & \red & \Cut\x{Q}{R} &
      P \red Q
      \\
      \defrule{r-struct} &
      P & \red & Q &
      P \pcong P' \red Q' \pcong Q
    \end{array}
  \]
  \caption{\label{tab:semantics}Structrual pre-congruence and reduction semantics of \Calculus.}
\end{table}

The operational semantics of \Calculus is shown in \Cref{tab:semantics} and
consists of a structural pre-congruence relation $\pcong$ and a reduction
relation $\red$, both of which are fairly standard. We write $P \red$ if $P \red
Q$ for some $Q$ and we say that $P$ is \emph{stuck}, notation $P \nred$, if not
$P \red$.

\newcommand{\Client}{\mathsf{Client}}
\newcommand{\Server}{\mathsf{Server}}

\begin{example}
  \label{ex:math}
  We can model client and server described in \Cref{sec:introduction} as the
  processes below.
  \[
    \Let\Client{x}{
      \Right\x.
      \Left\x.
      \Close\x
    }
    \qquad
    \Let\Server{x,z}{
      \Case\x{
        \Wait\x.\Close\z
      }{
        \Call\Server{x,z}
      }
    }
  \]
  For simplicity, we only focus on the overall structure of the processes rather
  than on the actual mathematical operations they perform, so we omit any
  exchange of concrete data from this model.
  \eoe
  %
\end{example}

We conclude this section with the definitions of the properties ensured by our
type system, namely \emph{deadlock freedom} and \emph{termination}. The latter
notion is particularly relevant in our setting since termination preservation is
a novel aspect of the subtyping relation that we are about to define.

\begin{definition}[deadlock-free process]
  \label{def:df}
  We say that $P$ is \emph{deadlock free} if $P \wred Q \nred$ implies that $Q$
  is not (structurally pre-congruent to) a process of the form
  $\Cut\x{R_1}{R_2}$.
\end{definition}

A deadlock-free process either reduces or it is stuck waiting to synchronize on
some free channel.

\begin{definition}[terminating process]
  \label{def:termination}
  A \emph{run} of a process $P$ is a (finite or infinite) sequence
  $(P_0,P_1,\dots)$ of processes such that $P_0 = P$ and $P_i \red P_{i+1}$
  whenever $i+1$ is a valid index of the sequence. We say that a run is maximal
  if either it is infinite or if the last process in it is stuck. We say that
  $P$ is \emph{terminating} if every maximal run of $P$ is finite.
\end{definition}

Note that a terminating process is not necessarily free of restrictions. For
example, $\Cut\x{\Fail\x}{\Close\x}$ is terminated but not deadlock free. It
really is the conjunction of deadlock freedom and termination (as defined above)
that ensure that a process is ``well behaved''.


\section{Types and subtyping}
\label{sec:types}

The type language for \Calculus consists of the propositions of
\muMALL~\cite{BaeldeDoumaneSaurin16,Doumane17,BaeldeEtAl22}, the infinitary
proof theory of multiplicative/additive linear logic extended with least and
greatest fixed points. We start from the definition of \emph{pre-types}, which
are linear logic propositions built using type variables taken from an infinite
set and ranged over by $X$ and $Y$.
\[
    \textbf{Pre-type}
    \qquad
    A, \TypeB ::= X \mid \Bot \mid \One \mid \Top \mid \Zero \mid A \tjoin \TypeB \mid A \tfork \TypeB \mid A \branch \TypeB \mid A \choice \TypeB \mid \tnu\X.\Type \mid \tmu\X.\Type
\]

The usual notions of free and bound type variables apply. A \emph{type} is a
closed pre-type. We assume that type variables occurring in types are
\emph{guarded}. That is, we forbid types of the form
$\sigma_1\X_1\dots\sigma_n\X_n.X_i$ where $\sigma_1,\dots,\sigma_n \in
\set{\mu,\nu}$.
We write $\dual\Type$ for the \emph{dual} of $\Type$, which is defined in the
expected way with the proviso that $\dual\X = X$. This way of dualizing type
variables is not problematic since we will always apply $\dual{\,\cdot\,}$ to
types, which contain no free type variables. As usual, we write
$A\subst\TypeB\X$ for the (pre-)type obtained by replacing every $X$ occurring
free in the pre-type $A$ with the type $\TypeB$.
Hereafter we let $\kappa$ range over the constants $\Zero$, $\One$, $\Bot$ and
$\Top$, we let $\star$ range over the connectives $\branch$, $\choice$, $\tjoin$
and $\tfork$ and $\sigma$ range over the binders $\mu$ and $\nu$.
Also, we say that any type of the form $\sigma\X.\Type$ is a $\sigma$-type.

We write $\subf$ for the standard \emph{sub-formula} relation on types.
To be precise, the relation $\subf$ is the least preorder on types such that $A
\subf \sigma\X.A$ and $A_i \subf A_1 \star A_2$.
For example, consider $A \eqdef \mu\X.\nu\Y.(1 \choice X)$ and its unfolding $A'
\eqdef \nu\Y.(1 \choice A)$. We have $A \subf 1 \choice A \subf A'$, hence $A$
is a sub-formula of $A'$.
Given a set $\TypeSet$ of types we write $\min\TypeSet$ for the $\subf$-minimum
type in $\TypeSet$ when it is defined.

\begin{table}
  \begin{mathpar}
    \inferrule[\defrule\SubBot]{~}{
      \Zero \subt A
    }
    \and
    \inferrule[\defrule\SubTop]{~}{
      A \subt \Top
    }
    \and
    \inferrule[\defrule\SubRefl]{~}{
      \kappa \subt \kappa
    }
    \and
    \inferrule[\defrule\SubCong]{
      A \subt A'
      \\
      B \subt B'
    }{
      A \star B \subt A' \star B'
    }
    \and
    \inferrule[\defrule{\SubLeft\sigma}]{
      A\subst{\sigma\X.A}\X \subt B
    }{
      \sigma\X.A \subt B
    }
    \and
    \inferrule[\defrule{\SubRight\sigma}]{
      A \subt B\subst{\sigma\X.B}\X
    }{
      A \subt \sigma\X.B
    }
  \end{mathpar}
  \caption{\label{tab:subt}Subtyping for session types.}
\end{table}

\Cref{tab:subt} shows the inference rules for subtyping judgments. The rules are
meant to be interpreted coinductively so that a judgment $A \subt B$ is
derivable if it is the conclusion of a finite/infinite derivation.
The rules \refrule\SubBot and \refrule\SubTop establish that $\Zero$ and $\Top$
are respectively the least and the greatest session type; the rules
\refrule\SubRefl and \refrule\SubCong establish reflexivity and pre-congruence
of $\subt$ with respect to all the constants and connectives; the rules
\refrule{\SubLeft\sigma} and \refrule{\SubRight\sigma} allow fixed points to be
unfolded on either side of $\subt$.

\begin{example}
  \label{ex:subt}
  Consider the types $A \eqdef \Zero \choice (\One \choice \Zero)$ and $\TypeB
  \eqdef \nu\X.(\Bot \branch X)$ which, as we will see later, describe the
  behavior of $\Client$ and $\Server$ in \Cref{ex:math}. We can derive both
  $\TypeA \subt \dual\TypeB$ and $\TypeB \subt \dual\TypeA$ thus:
  \[
    \begin{prooftree}
      \[
        \[
          \justifies
          \Zero \subt \One
          \using\refrule\SubBot
        \]
        \[
          \[
            \[
              \justifies
              \One \subt \One
              \using\refrule\SubRefl
            \]
            \[
              \justifies
              \Zero \subt \dual\TypeB
              \using\refrule\SubBot
            \]
            \justifies
            \One \choice \Zero \subt \One \choice \dual\TypeB
            \using\refrule\SubCong
          \]
          \justifies
          \One \choice \Zero \subt \dual\TypeB
          \using\refrule{\SubRight\mu}
        \]
        \justifies
        \TypeA \subt \One \choice \dual\TypeB
        \using\refrule\SubCong
      \]
      \justifies
      \TypeA \subt \dual\TypeB
      \using\refrule{\SubRight\mu}
    \end{prooftree}
    \qquad
    \begin{prooftree}
      \[
        \[
          \justifies
          \Bot \subt \Top
          \using\refrule\SubTop
        \]
        \[
          \[
            \[
              \justifies
              \Bot \subt \Bot
              \using\refrule\SubRefl
            \]
            \[
              \justifies
              \TypeB \subt \Top
              \using\refrule\SubTop
            \]
            \justifies
            \Bot \tjoin \TypeB \subt \Bot \tjoin \Top
            \using\refrule\SubCong
          \]
          \justifies
          \TypeB \subt \Bot \branch \Top
          \using\refrule{\SubLeft\nu}
        \]
        \justifies
        \Bot \tjoin \TypeB \subt \dual\TypeA
        \using\refrule\SubCong
      \]
      \justifies
      \TypeB \subt \dual\TypeA
      \using\refrule{\SubLeft\nu}
    \end{prooftree}
  \]
  %
\end{example}

The rules \refrule{\SubLeft\sigma} and \refrule{\SubRight\sigma} may look
suspicious since they are applicable to either side of $\subt$ regardless of the
intuitive interpretation of $\mu$ and $\nu$ as least and greatest fixed points.
In fact, if subtyping were solely defined by the derivability according to the
rules in \Cref{tab:subt}, the two fixed point operators would be equivalent. For
example, both $\mu\X.(\One \choice X) \subt \nu\X.(\One \choice X)$ and
$\nu\X.(\One \choice X) \subt \mu\X.(\One \choice X)$ are derivable even though
only the first relation seems reasonable. We will see in \Cref{ex:chatter} that
allowing the second relation is actually \emph{unsound}, in the sense that it
compromises the termination property enjoyed by well-typed processes.
We obtain a sound subtyping relation by ruling out some infinite derivations as
per the following (and final) definition of subtyping.

\begin{definition}[subtyping]
  \label{def:subt}
  We say that $A$ is a \emph{subtype} of $B$ if $A \subt B$ is derivable and,
  for every infinite branch $(A_i \subt B_i)_{i\in\NatSet}$ of the derivation,
  either (1) $\min\set{C \mid \existinf i: A_i = C}$ is a $\mu$-type or (2)
  $\min\set{C \mid \existinf i: B_i = C}$ is a $\nu$-type. Hereafter $\existinf
  i$ means the existence of infinitely many $i$'s with the stated property.
\end{definition}

The clauses (1) and (2) of \Cref{def:subt} make sure that $\mu$ and $\nu$ are
correctly interpreted as least and greatest fixed points. In particular, we
expect the least fixed point to be subsumed by a greatest fixed point, but not
vice versa in general. For example, consider once again the (straightforward)
derivations for the aforementioned subtyping judgments $\mu\X.(\One \choice X)
\subt \nu\X.(\One \choice X)$ and $\nu\X.(\One \choice X) \subt \mu\X.(\One
\choice X)$. The first derivation satisfies both clauses (there is only one
infinite branch, along which a $\mu$-type is unfolded infinitely many times on
the left hand side of $\subt$ and a $\nu$-type is unfolded infinitely many times
on the right hand side of $\subt$). The second derivation satisfies neither
clause. Therefore, $\mu\X.(\One \choice X)$ is a subtype of $\nu\X.(\One \choice
X)$ but $\nu\X.(\One \choice X)$ is not a subtype of $\mu\X.(\One \choice X)$.
As we will see in \Cref{sec:typing-rules}, the application of a subtyping
relation $A \subt B$ can be explicitly modeled as a process \emph{consuming} a
channel of type $A$ while \emph{producing} a channel of type $B$. According to
this interpretation of subtyping, we can see that clause (1) of \Cref{def:subt}
is just a dualized version of clause~(2).

In both clauses of \Cref{def:subt} there is a requirement that the type of the
fixed point on each side of the relation is determined by the $\subf$-minimum of
the types that appear infinitely often on either side. This is needed to handle
correctly alternating fixed points, by determining which one is actively
contributing to the infinite path. To see what effect this has consider the
types $A \eqdef \mu\X.\nu\Y.(\One \choice \X)$, $A' \eqdef \nu\Y.(\One \choice
A)$, $B \eqdef \mu\X.\mu\Y.(\One \choice \X)$ and $B' \eqdef \mu\Y.(\One \choice
B)$.
Observe that $A$ unfolds to $A'$, $A'$ unfolds to $\One \choice A$, $B$ unfolds
to $B'$ and $B'$ unfolds to $\One \choice B$. We have $A \subt B$ despite $Y$ is
bound by a \emph{greatest} fixed point on the left and by a \emph{least} fixed
point on the right. Indeed, both $A$ and $A'$ occur infinitely often in the
(only) infinite branch of the derivation for $A \subt B$, but $A \subf A'$
according to the intuition that the $\subf$-minimum type that occurs infinitely
often is the one corresponding to the outermost fixed point. In this case, the
outermost fixed point is $\mu\X$ which ``overrides'' the contribution of the
inner fixed point $\nu\Y$. The interested reader may refer to the literature on
\muMALL~\cite{BaeldeDoumaneSaurin16,Doumane17} for details.

Hereafter, unless otherwise specified, we write $A \subt B$ to imply that $A$ is
a subtype of $B$ and not simply that the judgment $A \subt B$ is derivable.
It is possible to show that $\subt$ is a preorder and that $A \subt \TypeB$
implies $\dual\TypeB \subt \dual\TypeA$.
Indeed, as illustrated in \Cref{ex:subt}, we obtain a derivation of $\dual{B}
\subt \dual{A}$ from that of $A \subt B$ by dualizing every judgment and by
turning every application of \refrule{\SubLeft\sigma} (respectively
\refrule{\SubRight\sigma}, \refrule\SubBot, \refrule\SubTop) into an application
of \refrule{\SubRight{\dual\sigma}} (respectively
\refrule{\SubLeft{\dual\sigma}}, \refrule\SubTop, \refrule\SubBot).

\section{Typing rules}
\label{sec:typing-rules}

\begin{table}
    \begin{mathpar}
        \inferrule[\defrule\CallRule]{
            \wtp{P}{\seqof{x : \Type}}
        }{
            \wtp{\Call\A{\seqof\x}}{\seqof{x : \Type}}
        }
        ~\Let\A{\seqof\x}{P}
        \and
        \inferrule[\defrule\SubRule]{
          \wtp{P}{\ContextC, x : A}
          \\
          \wtp{Q}{\ContextD, x : B}
        }{
          \wtp{\Cut\x{P}{Q}}{\ContextC,\ContextD}
        }
        ~A \subt \dual{B}
        \and
        \inferrule[\defrule\FailRule]{~}{
            \wtp{\Fail\x}{\Context, x : \Top}
        }
        \and
        \inferrule[\defrule\WaitRule]{
            \wtp{P}{\Context}
        }{
            \wtp{\Wait\x.P}{\Context, x : \Bot}
        }
        \and
        \inferrule[\defrule\CloseRule]{~}{
            \wtp{\Close\x}{x : \One}
        }
        \and
        \inferrule[\defrule\ReceiveRule]{
            \wtp{P}{\Context, y : \TypeA, x : \TypeB}
        }{
            \wtp{\Receive\x\y.P}{\Context, x : \TypeA \tjoin \TypeB}
        }
        \and
        \inferrule[\defrule\SendRule]{
            \wtp{P}{\ContextC, y : \TypeA}
            \\
            \wtp{Q}{\ContextD, x : \TypeB}
        }{
            \wtp{\Send\x\y{P}{Q}}{\ContextC, \ContextD, x : \TypeA \tfork \TypeB}
        }
        \and
        \inferrule[\defrule\CaseRule]{
            \wtp{P}{\Context, x : \TypeA}
            \\
            \wtp{Q}{\Context, x : \TypeB}
        }{
            \wtp{\Case\x{P}{Q}}{\Context, x : \TypeA \branch \TypeB}
        }
        \and
        \inferrule[\defrule\SelectRule]{
          \wtp{P}{\Context, x : \Type_i}
        }{
          \wtp{\Select\x{\InTag_i}.P}{\Context, x : \Type_0 \choice \Type_1}
        }
        ~i\in\set{0,1}
        \and
        \inferrule[\defrule{\FoldRule\sigma}]{
          \wtp{P}{\Context, x : A\subst{\sigma\X.A}\X}
        }{
          \wtp{P}{\Context, x : \sigma\X.A}
        }
        \and
    \end{mathpar}
    \caption{\label{tab:typing-rules} Typing rules for \Calculus.}
\end{table}

In this section we describe the typing rules for \Calculus.
Typing judgments have the form $\wtp{P}\Context$ where $P$ is a process and
$\Context$ is a typing context, namely a finite map from channels to types. We
can read this judgment as the fact that $P$ behaves as described by the types in
the range of $\Context$ with respect to the channels in the domain of
$\Context$.
We write $\dom\Context$ for the domain of $\Context$, we write $x : A$ for the
typing context with domain $\set\x$ that maps $x$ to $A$, we write
$\ContextC,\ContextD$ for the union of $\ContextC$ and $\ContextD$ when
$\dom\ContextC \cap \dom\ContextD = \emptyset$.
The typing rules of \Calculus are shown in \Cref{tab:typing-rules} and, with the
exception of \refrule\CallRule and \refrule\SubRule, they correspond to the
proof rules of \muMALL~\cite{BaeldeDoumaneSaurin16,Doumane17} in which the
context is the sequent being proved and the process is (almost) a syntactic
representation of the proof.
The rules for the multiplicative/additive constants and for the connectives are
standard. The rule~\refrule{\FoldRule\sigma} where $\sigma\in\set{\mu,\nu}$
simply unfolds fixed points regardless of their nature. The rule
\refrule\CallRule unfolds a process invocation into its definition, checking
that the invocation and the definition are well typed in the same context.
Finally, \refrule\SubRule checks that the composition $\Cut\x{P}{Q}$ is well
typed provided that $A$ (the behavior of $P$ with respect to $x$) is a subtype
of $\dual{B}$ (where $B$ is the behavior of $Q$ with respect to $x$).
In this sense \refrule\SubRule embeds the substitution principle induced by
$\subt$ since it allows a process behaving as $A$ to be used where a process
behaving as $\dual{B}$ is expected.
Note that the standard cut rule of \muMALL is a special case of \refrule\SubRule
because of the reflexivity of $\subt$.

Like in \muMALL, the rules are meant to be interpreted coinductively so that a
judgment $\wtp{P}\Context$ is deemed derivable if there is an arbitrary (finite
or infinite) derivation whose conclusion is $\wtp{P}\Context$.

\begin{example}
  \label{ex:math-typing}
  Let us show the typing derivations for the processes discussed in
  \Cref{ex:math}. To this aim, let $\TypeA \eqdef \Zero \choice (\One \choice
  \Zero)$ and $\TypeB \eqdef \tnu\X.(\Bot \branch X)$ and recall from
  \Cref{ex:subt} that $A \subt \dual{B}$. We derive:
  \[
    \begin{prooftree}
      \[
        \[
          \[
            \[
              \justifies
              \wtp{
                \Close\x
              }{
                \One
              }
              \using\refrule\CloseRule
            \]
            \justifies
            \wtp{
              \Left\x.
              \Close\x
            }{
              \One \choice \Zero
            }
            \using\refrule\SelectRule
          \]
          \justifies
          \wtp{
            \Right\x.
            \Left\x.
            \Close\x
          }{
            x : \TypeA
          }
          \using\refrule\SelectRule
        \]
        \justifies
        \wtp{
          \Call\Client\x
        }{
          x : \TypeA
        }
        \using\refrule\CallRule
      \]
      \[
        \[
          \[
            \[
              \[
                \justifies
                \wtp{
                  \Close\z
                }{
                  z : \One
                }
                \using\refrule\CloseRule
              \]
              \justifies
              \wtp{
                \Wait\x.\Close\z
              }{
                x : \Bot,
                z : \One
              }
              \using\refrule\WaitRule
            \]
            \[
              \vdots
              \justifies
              \wtp{
                \Call\Server{x,z}
              }{
                x : \TypeB,
                z : \One
              }
            \]
            \justifies
            \wtp{
              \Case\x{
                \Wait\x.\Close\z
              }{
                \Call\Server{x,z}
              }
            }{
              x : \Bot \branch \TypeB,
              z : \One
            }
            \using\refrule\CaseRule
          \]
          \justifies
          \wtp{
            \Case\x{
              \Wait\x.\Close\z
            }{
              \Call\Server{x,z}
            }
          }{
            x : \TypeB,
            z : \One
          }
          \using\refrule\CoRecRule
        \]
        \justifies
        \wtp{
          \Call\Server{x,z}
        }{
          x : \TypeB,
          z : \One
        }
        \using\refrule\CallRule
      \]      
      \justifies
      \wtp{
        \Cut\x{\Call\Client\x}{\Call\Server{x,z}}
      }{
        z : \One
      }
      \using\refrule\SubRule
    \end{prooftree}
  \]

  We can obtain a similar typing derivation by swapping $\Client$ and $\Server$
  and using the relation $B \subt \dual{A}$.
  Note that $\Client$ and $\Server$ cannot be composed directly using a standard
  cut since $\TypeA \ne \dual\TypeB$. So, the use of subtyping in the above
  typing derivation is important to obtain a well-typed composition.
  %
  \eoe
\end{example}

It is a known fact that not every \muMALL derivation is a valid
one~\cite{BaeldeDoumaneSaurin16,Doumane17,BaeldeEtAl22}. In order to
characterize the valid derivations we need some auxiliary notions which we
recall below.



\begin{definition}[thread]
  \label{def:thread}
  Let $\gamma = (\wtp{P_i}{\Context_i})_{i\in\NatSet}$ be an infinite branch in
  a typing derivation and recall that $\wtp{P_{i+1}}{\Context_{i+1}}$ is a
  premise of $\wtp{P_i}{\Context_i}$.
  A \emph{thread} of $\gamma$ is a sequence $(x_i)_{i\geq k}$ of channels such
  that $x_i \in \dom{\Context_i}$ and either $x_i = x_{i+1}$ or $P_i =
  \Send{x_i}{x_{i+1}}{P_{i+1}}{Q}$ or $P_i = \Receive{x_i}{x_{i+1}}.P_{i+1}$ for
  every $i\ge k$.
\end{definition}

Intuitively, a thread is an infinite sequence of channel names $(x_i)_{i\geq k}$
that are found starting from some position $k$ in an infinite branch
$(\wtp{P_i}{\Context_i})_{i\in\NatSet}$ and that pertain to the same session.
For example, consider the derivation in \Cref{ex:math-typing} and observe that
there is only one infinite branch, the rightmost one. The sequence
$(x,x,x,\dots)$ is a thread that starts right above the conclusion of the
derivation.

\begin{definition}[$\nu$-thread]
  \label{def:nu-thread}
  Given a branch $\gamma = (\wtp{P_i}{\Context_i})_{i\in\NatSet}$ and a thread
  $t = (x_i)_{i\geq k}$ of $\gamma$, we write $\InfOften{\gamma,t} \eqdef \set{
  \Type \mid \existinf i\geq k : \Context_i(x_i) = \Type }$. We say that $t$ is
  a \emph{$\nu$-thread} of $\gamma$ if $\min\InfOften{\gamma,t}$ is a
  $\nu$-type.
\end{definition}

Given a branch $\gamma = (\wtp{P_i}{\Context_i})_{i\in\NatSet}$ and a thread $t
= (x_i)_{i\geq k}$ of $\gamma$, the thread identifies an infinite sequence
$(\Context_i(x_i))_{i\geq k}$ of types. The set $\InfOften{\gamma,t}$ is the set
of those types that occur infinitely often in this sequence and
$\min\InfOften{\gamma,t}$ is the $\subf$-minimum among these types (it can be
shown that the minimum of any set $\InfOften{\gamma,t}$ is always
defined~\cite{Doumane17}). We say that $t$ is a $\nu$-thread if such minimum
type is a $\nu$-type.
In \Cref{ex:math-typing}, the thread $t = (x,x,x,\dots)$ identifies the sequence
$(\TypeB, \TypeB, \Bot \branch \TypeB,\TypeB,\dots)$ of types in which both
$\TypeB$ and $\Bot \branch \TypeB$ occur infinitely often. Since $\TypeB \subf
\Bot \branch \TypeB$ and $\TypeB$ is a $\nu$-type we conclude that $t$ is a
$\nu$-thread.

\begin{definition}[valid branch]
  \label{def:valid-branch}
  Let $\gamma = (\wtp{P_i}{\Context_i})_{i\in\NatSet}$ be an infinite branch of
  a typing derivation. We say that $\gamma$ is \emph{valid} if there is a
  $\nu$-thread $(x_i)_{i\geq k}$ of $\gamma$ such that \refrule{\FoldRule\nu} is
  applied to infinitely many of the $x_i$.
\end{definition}

\Cref{def:valid-branch} establishes that a branch is valid if it contains a
$\nu$-thread in which the $\nu$-type occurring infinitely often is also unfolded
infinitely often. This happens in \Cref{ex:math-typing}, in which the
\refrule\CoRecRule rule is applied infinitely often to unfold the type of $x$.
The reader familiar with the \muMALL literature may have spotted a subtle
difference between our notion of valid branch and the standard
one~\cite{BaeldeDoumaneSaurin16,Doumane17}. In \muMALL, a branch is valid only
provided that the $\nu$-thread in it is not ``eventually constant'', namely if
the greatest fixed point that defines the $\nu$-thread is unfolded infinitely
many times. This condition is satisfied by our notion of valid branch because of
the requirement that there must be infinitely many applications of
\refrule{\FoldRule\nu} concerning the names in the $\nu$-thread.
Now we can define the notion of valid typing derivation.

\begin{definition}[valid derivation]
  \label{def:valid-derivation}
  A typing derivation is \emph{valid} if so is every infinite branch in it.
\end{definition}

\newcommand{\Coercion}[3]{\sem{#1}_{#2,#3}}

\begin{table}
  \begin{mathpar}
    \Coercion{
      \inferrule{~}{
        \Zero \subt A
      }
    }\x\y \triangleq \Fail\x
    \and
    \Coercion{
      \inferrule{~}{
        \One \subt \One
      }
    }\x\y \triangleq \Wait\x.\Close\y
    \and
    \Coercion{
      \inferrule{
        \pi_1 :: \TypeA \subt \TypeA'
        \\
        \pi_2 :: \TypeB \subt \TypeB'
      }{
        \TypeA \choice \TypeB \subt \TypeA' \choice \TypeB'
      }
    }\x\y \triangleq \Case\x{\Left\y.\Coercion{\pi_1}\x\y}{\Right\y.\Coercion{\pi_2}\x\y}
    \and
    \Coercion{
      \inferrule{
        \pi_1 :: \TypeA \subt \TypeA'
        \\
        \pi_2 :: \TypeB \subt \TypeB'
      }{
        \TypeA \tfork \TypeB
        \subt
        \TypeA' \tfork \TypeB'
      }
    }\x\y \triangleq
    \Receive\x\u.\Send\y\v{\Coercion{\pi_1}\u\v}{\Coercion{\pi_2}\x\y}  
    \quad\text{($u$ and $v$ fresh)}
    \and
    \Coercion{
      \inferrule{
        \pi :: \TypeA\subst{\sigma\X.\TypeA}\X \subt \TypeB
      }{
        \sigma\X.\TypeA \subt \TypeB
      }
    }\x\y \triangleq \Coercion\pi\x\y
    \and
    \Coercion{
      \inferrule{
        \pi :: A \subt B\subst{\sigma\X.B}\X
      }{
        A \subt \sigma\X.B
      }
    }\x\y \triangleq \Coercion\pi\x\y
  \end{mathpar}
  \caption{\label{tab:coercion} Coercion semantics of subtyping (selected equations).}
\end{table}

Following Pierce~\cite{Pierce02} we provide a \emph{coercion semantics} to our
subtyping relation by means of two translation functions, one on derivations of
subtyping relations $\TypeA \subt \TypeB$ and one on typing derivations
$\wtp{P}\Context$ that make use of subtyping.
The first translation is (partially) given in \Cref{tab:coercion}. The
translation takes a derivation $\pi$ of a subtyping relation $A \subt B$ --
which we denote by $\pi :: A \subt B$ -- and generates a process
$\Coercion\pi\x\y$ that transforms (the protocol described by) $A$ into (the
protocol described by) $B$. The translation is parametrized by the two channels
$x$ and $y$ on which the transformation takes place: the protocol $A$ is
``consumed'' from $x$ and reissued on $y$ as a protocol $B$. In
\Cref{tab:coercion} we show a fairly complete selection of cases, the remaining
ones being obvious variations.
It is easy to establish that $\wtp{\Coercion\pi\x\y}{x : \dual{A}, y : B}$ if $A
\subt B$. In particular, consider an infinite branch $\gamma \eqdef
(\wtp{\Coercion{\pi_i}{x_i}{y_i}}{x_i : \dual{A_i}, y : B_i})_{i\in\NatSet}$ in
the typing derivation of the coercion where $A_0 = A$ and $B_0 = B$.
This branch corresponds to an infinite branch $(A_i \subt B_i)_{i\in\NatSet}$ in
$\pi :: A \subt B$.
According to \Cref{def:subt}, either clause (1) or clause (2) holds for this
branch. Suppose, without loss of generality, that clause (1) holds. Then
$\min\set{ C \mid \existinf i\in\NatSet: A_i = C }$ is a $\mu$-type.
According to \Cref{tab:coercion} we have that $(x_i)_{i\in\NatSet}$ is a
$\nu$-thread of $\gamma$, hence $\gamma$ is a valid branch.
Note that in general $\Coercion\pi\x\y$ is (the invocation of) a recursive
process.

Concerning the translation of typing derivations, it is defined by the equation
\begin{equation}
  \label{eq:ttd}
  \sem{
    \inferrule{
      \pi_1 :: \wtp{P}{\Context, x : A}
      \\
      \pi_2 :: \wtp{Q}{\Context, x : B}
    }{
      \wtp{\Cut\x{P}{Q}}\Context
    }
  }
  =
  \inferrule{
    \inferrule*{
      \sem{\pi_1\subst\y\x}
      \\
      \wtp{
        \Coercion\pi\y\x
      }{
        y : \dual{A},
        x : B
      }
    }{
      \wtp{
        \Cut\y{P\subst\y\x}{\Coercion\pi\y\x}
      }{
        \Context, x : B
      }
    }
    \\
    \sem{\pi_2}
  }{
    \wtp{
      \Cut\x{
        \Cut\y{
          P\subst\y\x
        }{
          \Coercion\pi\y\x
        }
      }{
        Q
      }
    }{
      \Context
    }
  }
\end{equation}
where $\pi :: A \subt \dual{B}$ and extended homomorphically to all the other
typing rules in \Cref{tab:typing-rules}. Note that \eqref{eq:ttd} turns every
application of the \refrule\SubRule into two applications of the standard
\muMALL cut rule.
The validity of the resulting typing derivation follows immediately from that of
the original typing derivation and that for the coercion, as argued earlier.


Thanks to the correspondence between \Calculus's typing rules and \muMALL,
well-typed \Calculus processes are well behaved. In particular, processes that
are well typed in a singleton context are deadlock free.

\begin{theorem}[deadlock freedom]
  \label{thm:df}
  If $\wtp{P}{x : A}$ then $P$ is deadlock free.
\end{theorem}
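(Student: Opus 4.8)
The plan is to follow the standard recipe for such logical deadlock-freedom results: prove a \emph{subject reduction} lemma and a \emph{progress} (canonical-forms) lemma, and then simply unfold \Cref{def:df}. For subject reduction I would show that if $\wtp{P}{\Context}$ with a valid derivation and $P \pcong Q$ or $P \red Q$, then $\wtp{Q}{\Context}$, again with a valid derivation. Preservation of \emph{typing} is a routine case analysis over \Cref{tab:semantics}: \refrule{r-close}, \refrule{r-comm} and \refrule{r-case} are exactly the principal cut reductions of \muMALL, \refrule{s-par-assoc} and \refrule{s-call} rearrange or unfold cuts, and \refrule{s-par-comm} is justified by the fact (recorded after \Cref{def:subt}) that $C_1 \subt \dual{C_2}$ is equivalent to $C_2 \subt \dual{C_1}$. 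The delicate ingredient is preservation of \emph{validity} (\Cref{def:valid-derivation}) under $\red$; rather than re-prove it, I would import it from the correspondence with \muMALL, where cut reduction is known to preserve the validity of infinitary proofs.

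The core of the argument is a progress lemma: if $\wtp{Q}{x : A}$ with a valid derivation, then either $Q \red$ or $Q$ is structurally pre-congruent to a process that is \emph{not} a cut (an action on the free channel $x$, or $\Fail\x$). I would prove this by inspecting $Q$ up to $\pcong$, unfolding invocations with \refrule{s-call}. The only interesting case is $Q \pcong \Cut\z{R_1}{R_2}$. Inverting \refrule\SubRule, the \emph{singleton} context $x : A$ is split across the two premises, so one side --- say $R_2$ --- is typed in the singleton context $z : C_2$ consisting of the cut channel alone, while $R_1$ is typed in $x : A, z : C_1$ with $C_1 \subt \dual{C_2}$. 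This is precisely where the singleton hypothesis is used: $R_2$ has no free channel besides $z$ and therefore cannot be blocked on anything else.

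From here the two components are driven in parallel. If either $R_1$ or $R_2$ reduces then $Q \red$ by \refrule{r-par}; otherwise each must expose a communication prefix. Since $R_2$'s only channel is $z$, it exposes an action on $z$. The process $R_1$ either exposes an action on the free channel $x$ --- in which case I would use structural pre-congruence to float that prefix above the cut, exhibiting $Q$ as pre-congruent to a non-cut --- or it exposes an action on $z$, facing $R_2$'s action across the cut. In this last situation I must check the two prefixes \emph{match}, yielding a redex for \refrule{r-close}, \refrule{r-comm} or \refrule{r-case}. This is where the extremal constants pull their weight: from $C_1 \subt \dual{C_2}$ and the rules of \Cref{tab:subt}, a genuine mismatch is impossible. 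For instance a $\Close\z$ on one side forces $C_2 = \One$, hence $C_1 \subt \Bot$, hence (as $\Zero$ is uninhabited) $C_1 = \Bot$ and the partner is a $\Wait\z$; and the offending deadlock $\Cut\z{\Fail\z}{\Close\z}$ cannot be typed because $\Top \subt \Bot$ is underivable --- exactly the phenomenon noted for $\Cut\x{\Fail\x}{\Close\x}$.

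I expect the real obstacle to be making this canonical-forms analysis rigorous in the \emph{coinductive} setting. Because typing derivations may be infinite, I cannot induct naively on the structure of $Q$; instead I must argue \emph{productivity}, namely that only finitely many $\pcong$-steps (call unfoldings and cut reassociations) separate $Q$ from a configuration that exposes a communication prefix. I would again derive this from the \muMALL metatheory, using the validity condition --- the existence of a $\nu$-thread unfolded infinitely often (\Cref{def:valid-branch}) --- to forbid an infinite silent regress. With subject reduction and progress in hand the theorem is immediate: given $\wtp{P}{x : A}$ and $P \wred Q \nred$, subject reduction yields a valid derivation of $\wtp{Q}{x : A}$, and since $Q \nred$ progress forces $Q$ to be pre-congruent to a non-cut, which by \Cref{def:df} is exactly deadlock freedom of $P$.
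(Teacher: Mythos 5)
Your overall route is genuinely different from the paper's. The paper never proves \Cref{thm:df} directly for the calculus with \refrule\SubRule: it compiles subtyping away, using the coercions of \Cref{tab:coercion} and the translation \eqref{eq:ttd} to replace every application of \refrule\SubRule by two ordinary cuts, argues that validity of derivations is preserved, and then inherits deadlock freedom wholesale from the \muMALL metatheory. You instead re-prove subject reduction and progress with \refrule\SubRule built in. That is a legitimate, more self-contained alternative, but it forces you to redo precisely the hard steps the paper delegates to \muMALL, and your sketch breaks in the progress lemma.

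The breaking point is the case where $R_1$ exposes a prefix on the free channel $x$. You propose to ``float that prefix above the cut'' by structural pre-congruence, but no such rule exists: $\pcong$ in \Cref{tab:semantics} only commutes and reassociates cuts and unfolds calls; it can never remove a top-level cut. Since \Cref{def:df} requires a stuck residual to be pre-congruent to \emph{no} cut, and a cut-shaped $Q$ is pre-congruent to itself, your progress lemma must actually show that every well-typed cut in a singleton context \emph{reduces}. So the blocked-on-$x$ case cannot be discharged by rearrangement; it must be shown impossible, i.e.\ you must prove that the partner $R_2$, typed in the singleton $z : C_2$, cannot itself be stuck. This needs uninhabitation facts (no valid derivation types a process in the context $z : \Zero$, nor in the empty context) which are of cut-elimination strength and would again have to be imported from \muMALL --- your productivity argument alone does not give them. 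Worse, with the paper's rules the needed statement is outright false at the fail case, which your mismatch analysis does not cover: \refrule\FailRule admits an \emph{arbitrary} context, so $\wtp{\Fail\x}{z : \Bot, x : \Top}$ and $\wtp{\Close\z}{z : \One}$ compose by \refrule\SubRule using $\Bot \subt \Bot$, yielding $\wtp{\Cut\z{\Fail\x}{\Close\z}}{x : \Top}$ --- a well-typed, stuck, top-level cut. You only check $\Cut\z{\Fail\z}{\Close\z}$, where the fail is on the cut channel and typability indeed fails. A direct proof has to treat fail on a non-cut channel specially (in \muMALL the corresponding cut is silently absorbed by the $\Top$ rule during cut elimination, a garbage-collection step with no counterpart among the reductions of \Cref{tab:semantics}); this discrepancy between proof normalization and process reduction is exactly what the paper's coercion-based argument passes over, and any self-contained progress proof must confront it explicitly.
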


Moreover, the cut elimination property of
\muMALL~\cite{BaeldeDoumaneSaurin16,Doumane17} can be used to prove that
well-typed \Calculus processes terminate, similarly to related
systems~\cite{LindleyMorris16,Derakhshan2022}.

\begin{theorem}[termination]
  \label{thm:termination}
  If $\wtp{P}\Context$ then $P$ is terminating.
\end{theorem}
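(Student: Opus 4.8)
The plan is to reduce termination of well-typed processes to the cut elimination property of \muMALL. The key idea is that, via the coercion semantics given by equation~\eqref{eq:ttd}, every valid typing derivation in \Calculus (which may use \refrule\SubRule) is translated into a valid typing derivation that uses only the standard \muMALL cut rule, together with the process-level syntactic structure. Since the translation turns each application of \refrule\SubRule into two ordinary cuts and leaves validity intact (as argued just before the theorem statement), it suffices to establish termination for processes whose derivations use only the \muMALL rules. Thus the first step is to invoke this reduction: assuming $\wtp{P}\Context$ with a valid derivation, I pass to the translated process $\sem{P}$ and observe that $P$ terminates if and only if $\sem{P}$ does, because the coercions $\Coercion\pi\y\x$ are themselves (recursive) processes whose reductions are finite along any run — this last point needs checking, and it follows from the fact that each coercion's typing derivation is valid and hence the coercion alone is terminating by the \muMALL argument.

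Second, I would set up the correspondence between the operational semantics of \Calculus and the cut elimination procedure of \muMALL. Each reduction rule in \Cref{tab:semantics} — \refrule{r-close}, \refrule{r-comm}, \refrule{r-case} — corresponds to a principal cut-reduction step in \muMALL, while \refrule{s-call} and \refrule{\FoldRule\sigma} correspond to the unfolding of definitions and fixed points. The commuting conversions and structural congruence rules \refrule{s-par-comm} and \refrule{s-par-assoc} correspond to the commutative cut-reduction steps. The goal of this step is to show that any infinite run $(P_0, P_1, \dots)$ of $P$ induces an infinite sequence of cut-reductions in the associated \muMALL proof, so that a nonterminating process would yield a nonterminating cut-elimination sequence.

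Third, I would appeal to the cut elimination (or, more precisely, the productivity/termination of the cut-reduction strategy) theorem for \muMALL~\cite{BaeldeDoumaneSaurin16,Doumane17}, which guarantees that every valid \muMALL proof can be cut-eliminated and, crucially, that the fair cut-reduction procedure on valid proofs is terminating in the appropriate sense. Combining this with the correspondence from the previous step yields a contradiction with the assumption of an infinite run, establishing that every maximal run of $P$ is finite, which is exactly termination in the sense of \Cref{def:termination}.

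The main obstacle I expect is the precise matching between \Calculus's reduction semantics and \muMALL's cut-elimination dynamics, particularly in handling the structural pre-congruence $\pcong$ and the coinductive (non-wellfounded) nature of both derivations and runs. The subtle point is that $\muMALL$'s termination result is about cut-reduction sequences on proofs, whereas \Calculus runs are sequences of \emph{process} reductions; I must ensure that no infinite process run can correspond to a \emph{finite} or \emph{stationary} cut-reduction sequence (e.g.\ an infinite run consisting entirely of \refrule{s-call} unfoldings with no genuine communication). This is where validity of the derivation is essential: the requirement of a $\nu$-thread with infinitely many \refrule{\FoldRule\nu} applications (\Cref{def:valid-branch}) rules out such degenerate infinite runs, because it forces any infinite branch to make genuine logical progress along a greatest fixed point, matching \muMALL's productivity criterion. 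Carefully verifying that process-level divergence always reflects into a genuinely non-terminating (rather than merely infinite but unproductive) cut-reduction sequence is the technical heart of the argument.
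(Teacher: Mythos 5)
Your proposal is correct and follows essentially the same route as the paper's own proof sketch: make the subtyping coercions explicit via the translation~\eqref{eq:ttd} to obtain a valid \muMALL proof, show that every reduction step of $P$ corresponds to one or more principal cut reductions in that proof, and derive a contradiction between an infinite run and \muMALL's guarantee (Doumane's Proposition~3.5) that valid proofs admit no infinite cut-reduction sequences. Your additional care about degenerate runs and about termination of the coercion processes themselves is a reasonable elaboration of details the paper leaves implicit, but it does not change the structure of the argument.
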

\begin{proof}[Proof sketch]
  The typing derivation for $\wtp{P}\Context$ with the subtype coercion made
  explicit maps directly to a valid \muMALL proof. Every reduction step of $P$
  maps directly to one or more principal reductions in the \muMALL proof. The
  reason why we could have more than one principal reduction for each process
  reduction comes from our choice of not having an explicit process form
  triggering the unfolding of a fixed point (see \refrule{\FoldRule\sigma}).
  Now, suppose that $P$ has an infinite run. Then there would be an infinite
  sequence of reduction steps starting from $P$, hence an infinite sequence of
  cut reductions in the corresponding \muMALL proof, which contradicts
  \cite[Proposition 3.5]{Doumane17}. Thus every run of $P$ must be finite.
\end{proof}

Note that \Cref{thm:termination} only assures that a well-typed process will not
reduce forever, not necessarily that the final configuration of the process is
free of restricted sessions. These may occur guarded by a prefix concerning some
free channel in the process. We can formulate a property of ``successful
termination'' by combining \Cref{thm:df,thm:termination}.

\begin{corollary}
  If $\wtp{P}{x : \One}$ then $P$ eventually reduces to $\Close\x$.
\end{corollary}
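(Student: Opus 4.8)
The plan is to combine the two preceding theorems with a canonical-forms (inversion) argument. First I would note that every process admits at least one maximal run; by \Cref{thm:termination} this run is finite, and a finite maximal run ends in a stuck process, so there is some $Q$ with $P \wred Q \nred$. By \Cref{thm:df}, since $P \wred Q \nred$, the process $Q$ is \emph{not} structurally pre-congruent to a composition $\Cut\x{R_1}{R_2}$. Using subject reduction for the type system, I would also keep $\wtp{Q}{x : \One}$; the cleanest way to obtain this is the route already used for termination, namely to make the subtyping coercions explicit via \eqref{eq:ttd} so that $P$ maps to a \muMALL proof and each reduction of $P$ to cut reductions, which preserve the conclusion sequent (here $x : \One$). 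It then remains to show that a stuck, non-cut process typed in the singleton context $x : \One$ must be $\Close\x$ up to $\pcong$.

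The heart of the matter is an inversion on the valid typing derivation of $\wtp{Q}{x : \One}$. Since the only channel in the context is $x$, any top-level communication prefix acts on $x$, and the rule justifying it fixes the type of $x$: the action rules \refrule\WaitRule, \refrule\ReceiveRule, \refrule\SendRule, \refrule\CaseRule, \refrule\SelectRule and \refrule\FailRule each require that type to be a connective or one of $\Bot$, $\Top$, none of which is $\One$; the rule \refrule{\FoldRule\sigma} requires a $\sigma$-type, which $\One$ is not; and \refrule\SubRule yields a composition, already excluded by deadlock freedom. This leaves exactly two cases: \refrule\CloseRule, which gives $Q = \Close\x$ as desired, and \refrule\CallRule, which gives $Q = \Call\A\x$ with $\Let\A\x{R}$ and $\wtp{R}{x : \One}$. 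In the latter case $Q \pcong R$ by \refrule{s-call}, and since $\pcong$ is absorbed by reduction (\refrule{r-struct}), $R$ is again stuck, not a cut, and typed in $x : \One$, so the analysis repeats on $R$.

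The main obstacle is to ensure that this chain of invocation-unfoldings bottoms out at $\Close\x$ rather than looping forever, and this is precisely where \emph{validity} of the derivation (\Cref{def:valid-derivation}) is indispensable. Termination alone does not forbid a non-productive definition such as $\Let\A\x{\Call\A\x}$, whose unique run is the trivial finite one even though it never reaches $\Close\x$. However, an infinite chain of \refrule\CallRule applications would form an infinite branch $\gamma$ whose only candidate thread is $t = (x,x,\dots)$, with $\InfOften{\gamma,t} = \set\One$; then $\min\InfOften{\gamma,t} = \One$ is not a $\nu$-type, so $t$ is not a $\nu$-thread and $\gamma$ is invalid, contradicting validity of the derivation. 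Hence only finitely many invocations are unfolded and the recursion terminates at $\Close\x$. Putting the pieces together yields $P \wred Q$ with $Q \pcong \Close\x$, which is the claim. The only ingredient not already established in the excerpt is subject reduction; supplying it (or equivalently reading off the type of $Q$ from the corresponding \muMALL proof) is the single non-routine gap to fill.
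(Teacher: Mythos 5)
Your proof is correct in substance and follows the only route the paper itself suggests: the paper gives no explicit proof of this corollary, merely the remark that it follows ``by combining \Cref{thm:df,thm:termination}''. What your writeup adds---and this is its real value---is the observation that the combination is \emph{not} immediate: after termination yields a stuck residual $Q$ with $P \wred Q \nred$ and deadlock freedom excludes that $Q$ is a cut, one still needs (i) subject reduction to know $\wtp{Q}{x:\One}$, (ii) an inversion showing that the only rules compatible with the singleton context $x:\One$ are \refrule\CloseRule and \refrule\CallRule, and (iii) the validity condition of \Cref{def:valid-derivation} to rule out an infinite chain of \refrule\CallRule unfoldings (your example $\Let\A\x{\Call\A\x}$ is exactly right: it is terminating and deadlock free yet never reaches $\Close\x$, and it is excluded only because its typing derivation has an infinite branch whose sole thread has $\min\InfOften{\gamma,t} = \One$, which is not a $\nu$-type). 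Step (iii) is a genuine insight that the paper's one-line justification glosses over. Two caveats: subject reduction is indeed unproved in the paper, so your proof shares that gap with the paper's implicit argument (your suggestion to route it through the coercion semantics and \muMALL cut reduction is the natural fix); and your conclusion $Q \pcong \Close\x$ rather than $Q = \Close\x$ is the honest reading of the statement---as your own \refrule\CallRule analysis shows, ``reduces to $\Close\x$'' can only hold up to structural pre-congruence (consider $\Let\A\x{\Close\x}$, for which $\Call\A\x$ is already stuck).
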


We conclude this section with an example showing that the additional clauses of
\Cref{def:subt} are key to making sure that $\subt$ is a termination-preserving
subtyping relation.

\newcommand{\Chatter}{\mathsf{Chatter}}

\begin{example}
  \label{ex:chatter}
  Consider a degenerate client $\Let\Chatter\x{\Right\x.\Call\Chatter\x}$ that
  engages into an infinite interaction with $\Server$ from \Cref{ex:math} and
  let $C \eqdef \nu\X.(\One \choice X)$. The derivation
  \[
    \begin{prooftree}
      \[
        \[
          \[
            \mathstrut\smash\vdots
            \justifies
            \wtp{
              \Call\Chatter\x
            }{
              x : C
            }
            \using\refrule{\FoldRule\nu}
          \]
          \justifies
          \wtp{
            \Right\x.\Call\Chatter\x
          }{
            x : \One \choice C
          }
          \using\refrule\SelectRule
        \]
        \justifies
        \wtp{
          \Call\Chatter\x
        }{
          x : \One \choice C
        }
        \using\refrule\CallRule
      \]
      \justifies
      \wtp{
        \Call\Chatter\x
      }{
        x : C
      }
      \using\refrule{\FoldRule\nu}
    \end{prooftree}
  \]
  is valid since the only infinite branch contains a $\nu$-thread $(x,x,\dots)$
  along which we find infinitely many applications of \refrule{\FoldRule\nu}.
  If we allowed the relation $C \subt \dual{B}$ (\cf the discussion leading to
  \Cref{def:subt}) the composition $\Cut\x{\Call\Chatter\x}{\Call\Server{x,z}}$
  would be well typed and it would no longer be the case that well-typed
  processes terminate, as the interaction between $\Chatter$ and $\Server$ goes
  on forever.
  \eoe
\end{example}


\section{Concluding remarks}
\label{sec:conclusion}

We have defined a subtyping relation for session types as the precongruence that
is insensitive to the (un)folding of recursive types and such that $\Zero$ and
$\Top$ act as least and greatest elements.
Despite the minimalistic look of the relation and the apparent rigidity in the
syntax of types, in which the arity of internal and external choices is fixed,
$\subt$ captures the usual co/contra variance of labels thanks to the
interpretation given to $\Zero$ and $\Top$. Other refinement relations for
session types with least and greatest elements have been studied in the
past~\cite{Padovani10,Padovani16}, although without an explicit correspondance with logic.

Unlike subtyping relations for session
types~\cite{GayHole05,CastagnaDezaniGiachinoPadovani09,MostrousYoshida15,GhilezanEtAl21}
that only preserve \emph{safety properties} of sessions (communication safety,
protocol fidelity and deadlock freedom), $\subt$ also preserves termination,
which is a \emph{liveness property}. For this reason, $\subt$ is somewhat
related to \emph{fair subtyping}~\cite{Padovani13,Padovani16}, which preserves
\emph{fair termination}~\cite{GrumbergFrancezKatz84,Francez86}. It appears that
$\subt$ is coarser than fair subtyping, although the exact relationship between
the two relations is difficult to characterize because of the fundamentally
different ways in which recursive behaviors are represented in the syntax of
types. The subtyping relation defined in this paper inherits least and greatest
fixed points from \muMALL~\cite{BaeldeDoumaneSaurin16,Doumane17}, whereas fair
subtyping has been studied on session type languages that either make use of
general recursion~\cite{Padovani13} or that use regular trees
directly~\cite{Padovani16}. A more conclusive comparison is left for future
work.

A key difference between the treatment of fixed points in this work and a
related logical approach to session subtyping~\cite{Horne2020} is that, while
both guarantee deadlock freedom, the current approach also guarantees
termination. Insight concerning the design of fixed points should be exportable
to other session calculi independently from any logical interpretation. In
particular, it would be interesting to study subtyping for \emph{asynchronous
session types}~\cite{MostrousYoshida15,GhilezanEtAl21} in light of
Definition~\ref{def:subt}. This can be done by adopting a suitable coercion
semantics to enable buffering of messages as in simple
orchestrators~\cite{Padovani10b}.


\paragraph{Acknowledgments.}
We are grateful to the anonymous reviewers for their thoughtful comments.

\bibliographystyle{eptcs}
\bibliography{main}

\begin{thebibliography}{10}
\providecommand{\bibitemdeclare}[2]{}
\providecommand{\surnamestart}{}
\providecommand{\surnameend}{}
\providecommand{\urlprefix}{Available at }
\providecommand{\url}[1]{\texttt{#1}}
\providecommand{\href}[2]{\texttt{#2}}
\providecommand{\urlalt}[2]{\href{#1}{#2}}
\providecommand{\doi}[1]{doi:\urlalt{https://doi.org/#1}{#1}}
\providecommand{\eprint}[1]{arXiv:\urlalt{https://arxiv.org/abs/#1}{#1}}
\providecommand{\bibinfo}[2]{#2}

\bibitemdeclare{inproceedings}{BaeldeEtAl22}
\bibitem{BaeldeEtAl22}
\bibinfo{author}{David \surnamestart Baelde\surnameend}, \bibinfo{author}{Amina
  \surnamestart Doumane\surnameend}, \bibinfo{author}{Denis \surnamestart
  Kuperberg\surnameend} \& \bibinfo{author}{Alexis \surnamestart
  Saurin\surnameend} (\bibinfo{year}{2022}): \emph{\bibinfo{title}{Bouncing
  Threads for Circular and Non-Wellfounded Proofs: Towards Compositionality
  with Circular Proofs}}.
\newblock In \bibinfo{editor}{Christel \surnamestart Baier\surnameend} \&
  \bibinfo{editor}{Dana \surnamestart Fisman\surnameend}, editors: {\slshape
  \bibinfo{booktitle}{{LICS} '22: 37th Annual {ACM/IEEE} Symposium on Logic in
  Computer Science, Haifa, Israel, August 2 - 5, 2022}},
  \bibinfo{publisher}{{ACM}}, pp. \bibinfo{pages}{63:1--63:13},
  \doi{10.1145/3531130.3533375}.

\bibitemdeclare{inproceedings}{BaeldeDoumaneSaurin16}
\bibitem{BaeldeDoumaneSaurin16}
\bibinfo{author}{David \surnamestart Baelde\surnameend}, \bibinfo{author}{Amina
  \surnamestart Doumane\surnameend} \& \bibinfo{author}{Alexis \surnamestart
  Saurin\surnameend} (\bibinfo{year}{2016}): \emph{\bibinfo{title}{Infinitary
  Proof Theory: the Multiplicative Additive Case}}.
\newblock In \bibinfo{editor}{Jean{-}Marc \surnamestart Talbot\surnameend} \&
  \bibinfo{editor}{Laurent \surnamestart Regnier\surnameend}, editors:
  {\slshape \bibinfo{booktitle}{25th {EACSL} Annual Conference on Computer
  Science Logic, {CSL} 2016, August 29 - September 1, 2016, Marseille,
  France}}, {\slshape \bibinfo{series}{LIPIcs}}~\bibinfo{volume}{62},
  \bibinfo{publisher}{Schloss Dagstuhl - Leibniz-Zentrum f{\"{u}}r Informatik},
  pp. \bibinfo{pages}{42:1--42:17}, \doi{10.4230/LIPIcs.CSL.2016.42}.

\bibitemdeclare{article}{CairesPfenningToninho16}
\bibitem{CairesPfenningToninho16}
\bibinfo{author}{Lu{\'{\i}}s \surnamestart Caires\surnameend},
  \bibinfo{author}{Frank \surnamestart Pfenning\surnameend} \&
  \bibinfo{author}{Bernardo \surnamestart Toninho\surnameend}
  (\bibinfo{year}{2016}): \emph{\bibinfo{title}{Linear logic propositions as
  session types}}.
\newblock {\slshape \bibinfo{journal}{Math. Struct. Comput. Sci.}}
  \bibinfo{volume}{26}(\bibinfo{number}{3}), pp. \bibinfo{pages}{367--423},
  \doi{10.1017/S0960129514000218}.

\bibitemdeclare{inproceedings}{CastagnaDezaniGiachinoPadovani09}
\bibitem{CastagnaDezaniGiachinoPadovani09}
\bibinfo{author}{Giuseppe \surnamestart Castagna\surnameend},
  \bibinfo{author}{Mariangiola \surnamestart Dezani{-}Ciancaglini\surnameend},
  \bibinfo{author}{Elena \surnamestart Giachino\surnameend} \&
  \bibinfo{author}{Luca \surnamestart Padovani\surnameend}
  (\bibinfo{year}{2009}): \emph{\bibinfo{title}{Foundations of session types}}.
\newblock In \bibinfo{editor}{Ant{\'{o}}nio \surnamestart Porto\surnameend} \&
  \bibinfo{editor}{Francisco~Javier \surnamestart
  L{\'{o}}pez{-}Fraguas\surnameend}, editors: {\slshape
  \bibinfo{booktitle}{Proceedings of the 11th International {ACM} {SIGPLAN}
  Conference on Principles and Practice of Declarative Programming, September
  7-9, 2009, Coimbra, Portugal}}, \bibinfo{publisher}{{ACM}}, pp.
  \bibinfo{pages}{219--230}, \doi{10.1145/1599410.1599437}.

\bibitemdeclare{article}{Derakhshan2022}
\bibitem{Derakhshan2022}
\bibinfo{author}{Farzaneh \surnamestart Derakhshan\surnameend} \&
  \bibinfo{author}{Frank \surnamestart Pfenning\surnameend}
  (\bibinfo{year}{2022}): \emph{\bibinfo{title}{{Circular Proofs as
  Session-Typed Processes: A Local Validity Condition}}}.
\newblock {\slshape \bibinfo{journal}{{Logical Methods in Computer Science}}}
  \bibinfo{volume}{{Volume 18, Issue 2}}, \doi{10.46298/lmcs-18(2:8)2022}.

\bibitemdeclare{phdthesis}{Doumane17}
\bibitem{Doumane17}
\bibinfo{author}{Amina \surnamestart Doumane\surnameend}
  (\bibinfo{year}{2017}): \emph{\bibinfo{title}{On the infinitary proof theory
  of logics with fixed points. (Th{\'{e}}orie de la d{\'{e}}monstration
  infinitaire pour les logiques {\`{a}} points fixes)}}.
\newblock Ph.D. thesis, \bibinfo{school}{Paris Diderot University, France}.
\newblock \urlprefix\url{https://tel.archives-ouvertes.fr/tel-01676953}.

\bibitemdeclare{book}{Francez86}
\bibitem{Francez86}
\bibinfo{author}{Nissim \surnamestart Francez\surnameend}
  (\bibinfo{year}{1986}): \emph{\bibinfo{title}{Fairness}}.
\newblock \bibinfo{series}{Monographs in Comp. Sci.},
  \bibinfo{publisher}{Springer}, \doi{10.1007/978-1-4612-4886-6}.

\bibitemdeclare{inproceedings}{Gay16}
\bibitem{Gay16}
\bibinfo{author}{Simon~J. \surnamestart Gay\surnameend} (\bibinfo{year}{2016}):
  \emph{\bibinfo{title}{Subtyping Supports Safe Session Substitution}}.
\newblock In \bibinfo{editor}{Sam \surnamestart Lindley\surnameend},
  \bibinfo{editor}{Conor \surnamestart McBride\surnameend},
  \bibinfo{editor}{Philip~W. \surnamestart Trinder\surnameend} \&
  \bibinfo{editor}{Donald \surnamestart Sannella\surnameend}, editors:
  {\slshape \bibinfo{booktitle}{A List of Successes That Can Change the World -
  Essays Dedicated to Philip Wadler on the Occasion of His 60th Birthday}},
  {\slshape \bibinfo{series}{Lecture Notes in Computer Science}}
  \bibinfo{volume}{9600}, \bibinfo{publisher}{Springer}, pp.
  \bibinfo{pages}{95--108}, \doi{10.1007/978-3-319-30936-1\_5}.

\bibitemdeclare{article}{GayHole05}
\bibitem{GayHole05}
\bibinfo{author}{Simon~J. \surnamestart Gay\surnameend} \&
  \bibinfo{author}{Malcolm \surnamestart Hole\surnameend}
  (\bibinfo{year}{2005}): \emph{\bibinfo{title}{Subtyping for session types in
  the pi calculus}}.
\newblock {\slshape \bibinfo{journal}{Acta Informatica}}
  \bibinfo{volume}{42}(\bibinfo{number}{2-3}), pp. \bibinfo{pages}{191--225},
  \doi{10.1007/s00236-005-0177-z}.

\bibitemdeclare{article}{GhilezanEtAl21}
\bibitem{GhilezanEtAl21}
\bibinfo{author}{Silvia \surnamestart Ghilezan\surnameend},
  \bibinfo{author}{Jovanka \surnamestart Pantovi\'{c}\surnameend},
  \bibinfo{author}{Ivan \surnamestart Proki\'{c}\surnameend},
  \bibinfo{author}{Alceste \surnamestart Scalas\surnameend} \&
  \bibinfo{author}{Nobuko \surnamestart Yoshida\surnameend}
  (\bibinfo{year}{2022}): \emph{\bibinfo{title}{Precise Subtyping for
  Asynchronous Multiparty Sessions}}.
\newblock {\slshape \bibinfo{journal}{ACM Trans. Comput. Logic}},
  \doi{10.1145/3568422}.
\newblock \bibinfo{note}{Just Accepted}.

\bibitemdeclare{inproceedings}{GrumbergFrancezKatz84}
\bibitem{GrumbergFrancezKatz84}
\bibinfo{author}{Orna \surnamestart Grumberg\surnameend},
  \bibinfo{author}{Nissim \surnamestart Francez\surnameend} \&
  \bibinfo{author}{Shmuel \surnamestart Katz\surnameend}
  (\bibinfo{year}{1984}): \emph{\bibinfo{title}{Fair Termination of
  Communicating Processes}}.
\newblock In: {\slshape \bibinfo{booktitle}{Proceedings of the Third Annual ACM
  Symposium on Principles of Distributed Computing}}, \bibinfo{series}{PODC
  '84}, \bibinfo{publisher}{Association for Computing Machinery},
  \bibinfo{address}{New York, NY, USA}, pp. \bibinfo{pages}{254--265},
  \doi{10.1145/800222.806752}.

\bibitemdeclare{inproceedings}{Honda93}
\bibitem{Honda93}
\bibinfo{author}{Kohei \surnamestart Honda\surnameend} (\bibinfo{year}{1993}):
  \emph{\bibinfo{title}{Types for Dyadic Interaction}}.
\newblock In \bibinfo{editor}{Eike \surnamestart Best\surnameend}, editor:
  {\slshape \bibinfo{booktitle}{{CONCUR} '93, 4th International Conference on
  Concurrency Theory, Hildesheim, Germany, August 23-26, 1993, Proceedings}},
  {\slshape \bibinfo{series}{Lecture Notes in Computer Science}}
  \bibinfo{volume}{715}, \bibinfo{publisher}{Springer}, pp.
  \bibinfo{pages}{509--523}, \doi{10.1007/3-540-57208-2\_35}.

\bibitemdeclare{inproceedings}{HondaVasconcelosKubo98}
\bibitem{HondaVasconcelosKubo98}
\bibinfo{author}{Kohei \surnamestart Honda\surnameend},
  \bibinfo{author}{Vasco~Thudichum \surnamestart Vasconcelos\surnameend} \&
  \bibinfo{author}{Makoto \surnamestart Kubo\surnameend}
  (\bibinfo{year}{1998}): \emph{\bibinfo{title}{Language Primitives and Type
  Discipline for Structured Communication-Based Programming}}.
\newblock In \bibinfo{editor}{Chris \surnamestart Hankin\surnameend}, editor:
  {\slshape \bibinfo{booktitle}{Programming Languages and Systems - ESOP'98,
  7th European Symposium on Programming, Lisbon, Portugal, March 28 - April
  4}}, {\slshape \bibinfo{series}{Lecture Notes in Computer Science}}
  \bibinfo{volume}{1381}, \bibinfo{publisher}{Springer}, pp.
  \bibinfo{pages}{122--138}, \doi{10.1007/BFb0053567}.

\bibitemdeclare{inproceedings}{Horne2020}
\bibitem{Horne2020}
\bibinfo{author}{Ross \surnamestart Horne\surnameend} (\bibinfo{year}{2020}):
  \emph{\bibinfo{title}{Session Subtyping and Multiparty Compatibility Using
  Circular Sequents}}.
\newblock In \bibinfo{editor}{Igor \surnamestart Konnov\surnameend} \&
  \bibinfo{editor}{Laura \surnamestart Kov{\'{a}}cs\surnameend}, editors:
  {\slshape \bibinfo{booktitle}{31st International Conference on Concurrency
  Theory, {CONCUR} 2020, September 1-4, 2020, Vienna, Austria (Virtual
  Conference)}}, {\slshape \bibinfo{series}{LIPIcs}} \bibinfo{volume}{171},
  \bibinfo{publisher}{Schloss Dagstuhl - Leibniz-Zentrum f{\"{u}}r Informatik},
  pp. \bibinfo{pages}{12:1--12:22}, \doi{10.4230/LIPIcs.CONCUR.2020.12}.

\bibitemdeclare{article}{HuttelEtAl16}
\bibitem{HuttelEtAl16}
\bibinfo{author}{Hans \surnamestart H{\"{u}}ttel\surnameend},
  \bibinfo{author}{Ivan \surnamestart Lanese\surnameend},
  \bibinfo{author}{Vasco~T. \surnamestart Vasconcelos\surnameend},
  \bibinfo{author}{Lu{\'{\i}}s \surnamestart Caires\surnameend},
  \bibinfo{author}{Marco \surnamestart Carbone\surnameend},
  \bibinfo{author}{Pierre{-}Malo \surnamestart Deni{\'{e}}lou\surnameend},
  \bibinfo{author}{Dimitris \surnamestart Mostrous\surnameend},
  \bibinfo{author}{Luca \surnamestart Padovani\surnameend},
  \bibinfo{author}{Ant{\'{o}}nio \surnamestart Ravara\surnameend},
  \bibinfo{author}{Emilio \surnamestart Tuosto\surnameend},
  \bibinfo{author}{Hugo~Torres \surnamestart Vieira\surnameend} \&
  \bibinfo{author}{Gianluigi \surnamestart Zavattaro\surnameend}
  (\bibinfo{year}{2016}): \emph{\bibinfo{title}{Foundations of Session Types
  and Behavioural Contracts}}.
\newblock {\slshape \bibinfo{journal}{{ACM} Comput. Surv.}}
  \bibinfo{volume}{49}(\bibinfo{number}{1}), pp. \bibinfo{pages}{3:1--3:36},
  \doi{10.1145/2873052}.

\bibitemdeclare{inproceedings}{LindleyMorris16}
\bibitem{LindleyMorris16}
\bibinfo{author}{Sam \surnamestart Lindley\surnameend} \&
  \bibinfo{author}{J.~Garrett \surnamestart Morris\surnameend}
  (\bibinfo{year}{2016}): \emph{\bibinfo{title}{Talking bananas: structural
  recursion for session types}}.
\newblock In \bibinfo{editor}{Jacques \surnamestart Garrigue\surnameend},
  \bibinfo{editor}{Gabriele \surnamestart Keller\surnameend} \&
  \bibinfo{editor}{Eijiro \surnamestart Sumii\surnameend}, editors: {\slshape
  \bibinfo{booktitle}{Proceedings of the 21st {ACM} {SIGPLAN} International
  Conference on Functional Programming, {ICFP} 2016, Nara, Japan, September
  18-22, 2016}}, \bibinfo{publisher}{{ACM}}, pp. \bibinfo{pages}{434--447},
  \doi{10.1145/2951913.2951921}.

\bibitemdeclare{article}{MostrousYoshida15}
\bibitem{MostrousYoshida15}
\bibinfo{author}{Dimitris \surnamestart Mostrous\surnameend} \&
  \bibinfo{author}{Nobuko \surnamestart Yoshida\surnameend}
  (\bibinfo{year}{2015}): \emph{\bibinfo{title}{Session typing and asynchronous
  subtyping for the higher-order {\(\pi\)}-calculus}}.
\newblock {\slshape \bibinfo{journal}{Inf. Comput.}} \bibinfo{volume}{241}, pp.
  \bibinfo{pages}{227--263}, \doi{10.1016/j.ic.2015.02.002}.

\bibitemdeclare{article}{Padovani10b}
\bibitem{Padovani10b}
\bibinfo{author}{Luca \surnamestart Padovani\surnameend}
  (\bibinfo{year}{2010}): \emph{\bibinfo{title}{Contract-based discovery of Web
  services modulo simple orchestrators}}.
\newblock {\slshape \bibinfo{journal}{Theor. Comput. Sci.}}
  \bibinfo{volume}{411}(\bibinfo{number}{37}), pp. \bibinfo{pages}{3328--3347},
  \doi{10.1016/j.tcs.2010.05.002}.

\bibitemdeclare{inproceedings}{Padovani10}
\bibitem{Padovani10}
\bibinfo{author}{Luca \surnamestart Padovani\surnameend}
  (\bibinfo{year}{2010}): \emph{\bibinfo{title}{Session Types = Intersection
  Types + Union Types}}.
\newblock In \bibinfo{editor}{Elaine \surnamestart Pimentel\surnameend},
  \bibinfo{editor}{Betti \surnamestart Venneri\surnameend} \&
  \bibinfo{editor}{Joe~B. \surnamestart Wells\surnameend}, editors: {\slshape
  \bibinfo{booktitle}{Proceedings Fifth Workshop on Intersection Types and
  Related Systems, {ITRS} 2010, Edinburgh, U.K., 9th July 2010}}, {\slshape
  \bibinfo{series}{{EPTCS}}}~\bibinfo{volume}{45}, pp. \bibinfo{pages}{71--89},
  \doi{10.4204/EPTCS.45.6}.

\bibitemdeclare{inproceedings}{Padovani13}
\bibitem{Padovani13}
\bibinfo{author}{Luca \surnamestart Padovani\surnameend}
  (\bibinfo{year}{2013}): \emph{\bibinfo{title}{Fair Subtyping for Open Session
  Types}}.
\newblock In \bibinfo{editor}{Fedor~V. \surnamestart Fomin\surnameend},
  \bibinfo{editor}{Rusins \surnamestart Freivalds\surnameend},
  \bibinfo{editor}{Marta~Z. \surnamestart Kwiatkowska\surnameend} \&
  \bibinfo{editor}{David \surnamestart Peleg\surnameend}, editors: {\slshape
  \bibinfo{booktitle}{Automata, Languages, and Programming - 40th International
  Colloquium, {ICALP} 2013, Riga, Latvia, July 8-12, 2013, Proceedings, Part
  {II}}}, {\slshape \bibinfo{series}{Lecture Notes in Computer Science}}
  \bibinfo{volume}{7966}, \bibinfo{publisher}{Springer}, pp.
  \bibinfo{pages}{373--384}, \doi{10.1007/978-3-642-39212-2\_34}.

\bibitemdeclare{article}{Padovani16}
\bibitem{Padovani16}
\bibinfo{author}{Luca \surnamestart Padovani\surnameend}
  (\bibinfo{year}{2016}): \emph{\bibinfo{title}{Fair subtyping for multi-party
  session types}}.
\newblock {\slshape \bibinfo{journal}{Math. Struct. Comput. Sci.}}
  \bibinfo{volume}{26}(\bibinfo{number}{3}), pp. \bibinfo{pages}{424--464},
  \doi{10.1017/S096012951400022X}.

\bibitemdeclare{book}{Pierce02}
\bibitem{Pierce02}
\bibinfo{author}{Benjamin~C. \surnamestart Pierce\surnameend}
  (\bibinfo{year}{2002}): \emph{\bibinfo{title}{Types and programming
  languages}}.
\newblock \bibinfo{publisher}{{MIT} Press}.

\bibitemdeclare{article}{Wadler14}
\bibitem{Wadler14}
\bibinfo{author}{Philip \surnamestart Wadler\surnameend}
  (\bibinfo{year}{2014}): \emph{\bibinfo{title}{Propositions as sessions}}.
\newblock {\slshape \bibinfo{journal}{J. Funct. Program.}}
  \bibinfo{volume}{24}(\bibinfo{number}{2-3}), pp. \bibinfo{pages}{384--418},
  \doi{10.1017/S095679681400001X}.

\end{thebibliography}


\end{document}